\documentclass[aps,pra,twocolumn,superscriptaddress,showpacs,floatfix]{revtex4-2}
\long\def\/*#1*/{}
\usepackage{amsfonts}
\pdfoutput=1
\usepackage{graphicx}
\usepackage{amsmath}
\usepackage{amsthm}
\usepackage{comment}
\usepackage[colorlinks=true, citecolor=blue, urlcolor=blue ]{hyperref}
\usepackage{graphicx}

\usepackage{braket}
\usepackage[author={AB}]{pdfcomment}
\usepackage{graphicx}
\usepackage{subcaption}
\usepackage{float}    
\usepackage{caption}
\usepackage{ragged2e} 
\newtheorem{theorem}{Theorem}
\newtheorem{corollary}{Corollary}
\newtheorem{proposition}{Proposition}

\newtheorem{remark}{Remark}
\newtheorem{lemma}{Lemma}

\begin{document}
\title{Quantifying Entangling Power of Controlled Unitary Gates}
\author{Ankur Haldar}
\email{ankurivuhaldar@gmail.com}
\affiliation{Center for Quantum Engineering, Research, and Education, TCG CREST, Bidhan Nagar, Kolkata - 700091, India.}

\author{Pankaj Agrawal}
\email{pankaj.agrawal@tcgcrest.org}
\affiliation{Center for Quantum Engineering, Research, and Education, TCG CREST, Bidhan Nagar, Kolkata - 700091, India.}

\author{Prasenjit Deb}
\email{devprasen@gmail.com}
\affiliation{Center for Quantum Engineering, Research, and Education, TCG CREST, Bidhan Nagar, Kolkata - 700091, India.}

\date{\today}

\begin{abstract}
Applying controlled unitary gates to generate entanglement between qubits is a routine task in both quantum communication and computation. The existing tools for predicting how much entanglement a given gate can generate require either  simulation of the entangling circuit or averaging over a distribution of inputs for a given controlled unitary gate. Here, we introduce a computable quantity $\zeta$ that not only determines whether a controlled unitary gate generates entanglement, but also quantifies the entanglement for any specific input without requiring the construction of the output state. For two-qubit controlled unitary gates, we establish the physical conditions corresponding to the extremum values of the proposed quantity. Extending the dimension of control and target registers to arbitrary size through a generalized controlled unitary architecture, we derive a universal upper bound on $\zeta$ and identify the conditions for its saturation. Later we establish functional relation between the quantity and other known quantities, such as purity, normalized linear entropy, and von Neumann entropy. Finally, we compare our results with previous research work and show that the quantity proposed in this work achieves the previously known optimal values of entangling power of controlled unitary gates for some specific dimensions of target and control registers.
\end{abstract}

\maketitle

\section{Introduction}

Entanglement is the resource that separates quantum information processing tasks from their classical counterparts, and its generation is an essential task in both quantum communication and computation~\cite{Horodecki_2009}. In practice, entanglement generation is almost always accomplished using controlled unitary gates by conditioning an operation on one register upon the state of another~\cite{Nielsen_Chuang_2010}. It is well-known that two-qubit controlled unitary gates are universal for quantum computation when supplemented with arbitrary single-qubit operations~\cite{Barenco_1995,DiVincenzo_1995}. Existing tools for assessing how much entanglement a given controlled unitary gate can generate fall into two broad categories. One is the direct method, which requires simulating the entangling circuit and computing the entanglement of the resultant state. However, the cost incurred in this method grows exponentially with the size of the control and target registers. Another method, which is an indirect one, is based on averaging over a distribution of inputs states for a given controlled unitary operator, signifying the entangling power of that operator~\cite{Zanardi_2000}. The above mentioned method considers ideal scenarios with no noise acting during the controlled operation and the gates are perfect having unit fidelity. In realistic scenarios, however, the gate operations are not perfect due to noise, leading to noisy entangling power \cite{noisy_gates}.

In this article, we consider ideal scenarios and introduce a quantity $\zeta$, which can be considered as an alternative mathematical definition of entangling power, for the class of controlled unitary gates. Given only the probability distribution of the control register's basis states and the pairwise trace overlaps of the target unitaries conditional to these states, the newly proposed quantity not only determines whether a given controlled unitary gate will generate entanglement between the control and target registers but also quantifies the amount of entanglement for any specific input states without requiring simulation of the full state vector. We first mathematically define the proposed computable quantity for a general two-qubit controlled unitary gate, establish the physical conditions corresponding to its extremum values, and prove that the maximum can be achieved only when the trace of the target unitary is zero. Then we introduce generalized controlled unitary (GCU) gates and extend the definition of the quantity for control and target registers of arbitrary dimension. For this multi-qubit setting, we derive a universal upper bound on $\zeta$ and illustrate the bound with the example of Toffoli gate. 
Later we establish functional relation between the newly defined quantity to the conventional measures of entanglement, including purity, linear entropy, and von Neumann entropy. Finally, we connect $\zeta$ to operator entangling power as shown in \cite{Zanardi_2000} by deriving its explicit average over Haar random inputs for the controlled unitary architecture and show that the resulting bound for entangling power is not just tighter than that in the original work, but for several dimension, the quantity defined in our work exactly reproduces the numerically obtained optimal values in the same.

The remainder of the article is arranged as follows: In Section (\ref{sec:2}), we define $\zeta$ for two-qubit controlled gates. We extend the definition of the quantity for control and target registers of arbitrary size in Section (\ref{sec:3}). Then we connect the quantity with the already known measures of entanglement in Section (\ref{sec:4}). In Section (\ref{sec:5}), we compare our approach with previous research work. We summarize the framework's predictive use in circuit design and its relation to existing gate characterization techniques in Section (\ref{sec:6}), before concluding in Section (\ref{sec:7}).

\section{Controlled Unitary Gates in two-qubit systems}
\label{sec:2}

\subsection{CNOT Gate}
We begin with a fundamental bipartite controlled unitary gate architecture comprising of two-qubits, $|q_1\rangle$ being the control and $|q_2\rangle$ the target. Without loss of generality and ignoring the global phase, the state of the qubits can be parameterized as
\begin{eqnarray}
|q_1\rangle &=& a|0\rangle + e^{i\theta}\sqrt{1-a^2}|1\rangle,\nonumber\\
|q_2\rangle &=& b|0\rangle + e^{i\phi}\sqrt{1-b^2}|1\rangle, \label{eq:1}
\end{eqnarray}
where $a, b \in [0,1]$ are the amplitudes, and $\theta, \phi \in [0,2\pi)$ denote the relative phases. Applying a CNOT operation between these qubits, we get a two-qubit state as
\begin{eqnarray}
|\Psi\rangle &=& ab|00\rangle + e^{i\phi}a\sqrt{1-b^2}|01\rangle\nonumber\\
&+& e^{i(\theta+\phi)}\sqrt{(1-a^2)(1-b^2)}|10\rangle\nonumber\\
&+& e^{i\theta}b\sqrt{1-a^2}|11\rangle.
\end{eqnarray}
We notice that by applying a local unitary operation. one can eliminate the phase of the control register, but the phase of the target register remains. Therefore the entanglement of the resulting state depends on the phase in target state. Applying the phase rotation on the $|1\rangle$ of the control qubit we see,
\begin{eqnarray}
|\Psi'\rangle &=& ab|00\rangle + e^{i\phi}a\sqrt{1-b^2}|01\rangle\nonumber\\ 
&+& e^{i\phi}\sqrt{(1-a^2)(1-b^2)}|10\rangle\nonumber\\ 
&+& b\sqrt{1-a^2}|11\rangle. \label{eq:3}
\end{eqnarray}
Tracing out the target register yields the reduced density matrix $\rho_1 = \text{Tr}_2(|\Psi'\rangle\langle\Psi'|)$ of the control qubit. Its eigenvalues are $\lambda_{\pm} = \frac{1 \pm \sqrt{1-4\zeta}}{2}$, where $\zeta \equiv \det(\rho_1 )$. The $\zeta$ is a well known measure of the entanglement of two-qubit systems. Its relation to concurrence $C = 2\sqrt{\zeta}$ \cite{Wootters_1998,Wootters_Hill,Wootters_Coffman_2000}. The explicit evaluation of the determinant of $\rho_1$ gives
\begin{equation}
\zeta = a^2(1-a^2)\left[1 - 4b^2(1-b^2)\cos^2\phi\right]. \label{eq:4}
\end{equation}
For the state to be maximally entangled or separable, the values of this quantity are $\zeta = \frac{1}{4}~ \text{or}~ \zeta = 0$, respectively. Clearly, the resultant two-qubit state will be non-maximally entangled for $0<\zeta<1/4$. We now obtain the conditions on initial states for these values of $\zeta$.

\subsubsection{$\zeta = 0$}
From Eq.~(\ref{eq:4}), we see that this value holds for two physical scenarios -- 
\\
i)  The control qubit completely lacks a coherent superposition, disabling the target's conditional split, i.e., $a \in \{0,1\}$.
\\
ii) The initial target qubit is the eigenstate of the operator $X$, namely, the $|+\rangle$ or $|-\rangle$ states. From Eq. (\ref{eq:4}), the mathematical condition is $4b^2(1-b^2)\cos^2\phi = 1$. This equation has unique solution for $b ~\text{and}~ \phi$. It is $b = \frac{1}{\sqrt{2}}$ and $\phi \in \{0, \pi\}$.

\subsubsection{$\zeta = 1/4$}
The physical scenario leading to this value is -- 
\\
i) When the control qubit is prepared in a coherent superposition of the basis states, i.e., $a = \frac{1}{\sqrt{2}}$, and
\\
ii) the initial target state is prepared in any of the states with $\phi \in \{\frac{\pi}{2}, \frac{3\pi}{2}\}$ regardless of the amplitude parameter $b$, i.e, $4b^2(1-b^2)\cos^2\phi = 0$. This condition implies that initializing the target qubit with a relative phase of $\pm i$ guarantees maximal entanglement yield. 
The target qubit takes the explicit form of,
    \begin{equation}
    |q_2\rangle = b|0\rangle \pm i\sqrt{1-b^2}|1\rangle.
    \end{equation}
For initial $|q_1\rangle = |\pm\rangle$ and $|q_2\rangle = b|0\rangle \pm i\sqrt{1-b^2}|1\rangle$, the resulting state is
    \begin{eqnarray}
    |\Psi\rangle &=& \frac{1}{\sqrt{2}}|0\rangle(b|0\rangle \pm i\sqrt{1-b^2}|1\rangle) \nonumber \\ &\pm& \frac{1}{\sqrt{2}}|1\rangle(b|1\rangle \pm i\sqrt{1-b^2}|0\rangle). \label{eq:maxenttwoqubit}
    \end{eqnarray}

These two-qubit maximally entangled states are local unitary equivalent to the standard Bell states \cite{Horodecki_2009, Nielsen_Chuang_2010}, i.e. the Bell state $|\phi^+\rangle = \frac{1}{\sqrt{2}}[|00\rangle + |11\rangle]$ can be transformed into $|\Psi\rangle$ from Eq. (\ref{eq:maxenttwoqubit}) by applying $I \otimes U$, where
\begin{equation}
U = \begin{pmatrix} b & \pm i \sqrt{1-b^2} \\ \pm i \sqrt{1-b^2} & b \end{pmatrix}
\end{equation}

\subsection{Generalized controlled unitary gates}
We extend the above analysis by replacing the CNOT gate with an arbitrary single-qubit unitary gate $U$ and conditioning its operation by the target qubit's state. We assume that the unitary acts on the target qubit \textit{iff} the control qubit is in the state $|1\rangle$, i.e., $CU = |0\rangle \langle 0| \otimes I + |1\rangle \langle 1| \otimes U$. The initial control and target states are same as that in Eq.(\ref{eq:1}).

Computing the determinant of reduced state of the control qubit from the resultant two-qubit state formed after the action of the  generalized controlled unitary interaction yields 
\begin{equation}
\zeta = a^2(1-a^2)\left[1 - \text{Tr}(U^\dagger\rho_B)\text{Tr}(U\rho_B)\right], \label{eq:6}
\end{equation}
where $\rho_B = |q_2\rangle\langle q_2|$ represents the initial target state. It can be easily verified that taking $U = X$ leads to Eq.(\ref{eq:4}).
If we do the same analysis as done in the previous section, then we can find that the mathematical condition in terms of $\zeta$ is the same as that for a CNOT gate. When $\zeta = 0$, zero entanglement is generated between the two qubits. Such a case can arise when the initial target state $|q_2\rangle$ is an eigenstate of the unitary $U$. In such case, we can write, $U|q_2\rangle = e^{i\theta}|q_2\rangle$ and $\text{Tr}(U^\dagger\rho_B)\text{Tr}(U\rho_B) = |\text{Tr}(U|q_2\rangle \langle q_2|)|^2 = 1$. When $\zeta = 1/4$, a two-qubit maximally entangled entangled state is generated between the control and target qubits. However, the physical condition leading to successful generation of this entanglement is more general.

\begin{theorem}[]
\label{traceless_unitary}
In a two-qubit system, if qubit 1 is prepared in state $|+\rangle$ or $|-\rangle$ and qubit 2 is prepare in state $|q_2\rangle$ and a controlled unitary gate ($|0\rangle\langle0| \otimes I + |1\rangle\langle1| \otimes U$) is applied, then there will always exist a $|q_2\rangle$ that will lead to a maximally entangled state iff the unitary operator $U$ is traceless, i.e.

\begin{equation}
\text{Tr}(U) = 0.\nonumber\\
\end{equation}

\end{theorem}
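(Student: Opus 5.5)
By Eq.~(\ref{eq:6}), with the control prepared in $|\pm\rangle$ (so $a^2=\tfrac12$ and $a^2(1-a^2)=\tfrac14$), the resulting state is maximally entangled, i.e.\ $\zeta=\tfrac14$, exactly when
\begin{equation}
\text{Tr}(U^\dagger\rho_B)\,\text{Tr}(U\rho_B)=|\langle q_2|U|q_2\rangle|^2=0 .
\end{equation}
For $|-\rangle$ the sign of the amplitude on $|1\rangle$ is only a relative phase $\theta=\pi$, and Eq.~(\ref{eq:6}) does not depend on $\theta$, so both choices of control state are covered. The theorem therefore reduces to a purely linear-algebraic statement: there exists a unit vector $|q_2\rangle\in\mathbb{C}^2$ with $\langle q_2|U|q_2\rangle=0$ if and only if $\text{Tr}(U)=0$. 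Equivalently, $0$ lies in the numerical range of $U$ if and only if $U$ is traceless.

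To prove this, I will diagonalize $U$ using its spectral decomposition. Write $U=e^{i\alpha}|u_1\rangle\langle u_1|+e^{i\beta}|u_2\rangle\langle u_2|$ with orthonormal eigenvectors $|u_1\rangle,|u_2\rangle$. Expanding $|q_2\rangle=c_1|u_1\rangle+c_2|u_2\rangle$ with $|c_1|^2+|c_2|^2=1$ gives
\begin{equation}
\langle q_2|U|q_2\rangle=p\,e^{i\alpha}+(1-p)\,e^{i\beta},\qquad p=|c_1|^2\in[0,1].
\end{equation}
As $|q_2\rangle$ varies, this expectation value therefore ranges exactly over the chord joining $e^{i\alpha}$ and $e^{i\beta}$ on the unit circle. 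This is the Toeplitz--Hausdorff picture for normal $2\times2$ matrices.

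Both directions then follow from this chord description.
\begin{itemize}
\item[($\Leftarrow$)] If $\text{Tr}(U)=e^{i\alpha}+e^{i\beta}=0$, then $e^{i\beta}=-e^{i\alpha}$. Taking $p=\tfrac12$, i.e.\ $|q_2\rangle=(|u_1\rangle+e^{i\chi}|u_2\rangle)/\sqrt2$ for any phase $\chi$, gives $\langle q_2|U|q_2\rangle=0$. This is an explicit family of target states yielding $\zeta=\tfrac14$.
\item[($\Rightarrow$)] Suppose $p e^{i\alpha}+(1-p)e^{i\beta}=0$ for some $p\in[0,1]$. Taking moduli, $p=1-p$, so $p=\tfrac12$, and hence $e^{i\alpha}=-e^{i\beta}$. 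This gives $\text{Tr}(U)=0$. Here $p\in\{0,1\}$ is excluded automatically, since each eigenvalue has unit modulus.
\end{itemize}

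The only real point needing care is the reduction step. I must confirm that Eq.~(\ref{eq:6}) holds for arbitrary $U$ and arbitrary pure $|q_2\rangle$, not just the parametrized one, and that $\zeta=\tfrac14$ is equivalent to maximal entanglement (via $\lambda_\pm=\tfrac12$). This is straightforward: the determinant of $\rho_1$ equals $a^2(1-a^2)(1-|\langle q_2|U|q_2\rangle|^2)$, because the off-diagonal element of $\rho_1$ is $a\sqrt{1-a^2}\,e^{-i\theta}\langle q_2|U^\dagger|q_2\rangle$ up to conjugation.

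I would also remark that the condition is phase-sensitive in the expected way. $\text{Tr}(U)=0$ means the eigenvalues are antipodal, and multiplying $U$ by a global phase preserves this, consistent with $\zeta$ depending only on $|\langle q_2|U|q_2\rangle|$.
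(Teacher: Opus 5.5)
Your proof is correct, and it takes a different route from the paper's. The paper works in explicit coordinates. It parametrizes $U$ by $(\alpha,\beta_1,\beta_2,\beta_3)$ and the target by $(b,\phi)$, then expands $\mathrm{Tr}(U\rho_B)$. It observes that $e^{-i\alpha}\mathrm{Tr}(U\rho_B)$ has the state-independent real part $\cos\beta_1\cos\beta_2=e^{-i\alpha}\mathrm{Tr}(U)/2$, where $\alpha$ is the paper's global phase, not your eigenphase. The imaginary part can be tuned to zero by choosing $(b,\phi)$. The converse is then ``a nonzero constant real part can never vanish.'' The existence direction is a case split ($\cos\beta_1=0$ or $\cos\beta_2=0$) plus the assertion that a matching $(b,\phi)$ can always be computed.

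Your spectral argument is the coordinate-free form of the same geometry. The matrix $e^{-i\alpha}U$ lies in $SU(2)$, with eigenvalues $e^{\pm i\omega}$ where $\cos\omega=\cos\beta_1\cos\beta_2$. The paper's constant real part is just the statement that the chord joining these eigenvalues is a vertical segment.

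What your route buys:
\begin{itemize}
\item It is shorter and needs no case analysis.
\item It gives an explicit and complete solution set. Since $p=\tfrac12$ is forced, the optimal targets are exactly the equal-weight superpositions $(|u_1\rangle+e^{i\chi}|u_2\rangle)/\sqrt2$, a great circle on the Bloch sphere. The paper's existence step is comparatively informal.
\item It shows why tracelessness is a qubit-specific criterion. For a qubit control with an $N$-dimensional target, the same reduction asks whether $0$ lies in the convex hull of the spectrum of $U$. For $N>2$, $\mathrm{Tr}\,U=0$ is sufficient for this but no longer necessary.
\end{itemize}

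What the paper's route buys: it stays in the computational-basis parametrization of Eq.~(\ref{eq:1}) used throughout. The optimal targets therefore come out directly in those coordinates, without diagonalizing $U$; for example, $\phi\in\{\pi/2,3\pi/2\}$ for $U=X$.
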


\begin{proof}
From Eq.(\ref{eq:6}), we find that satisfying the condition of $\zeta = 1/4$ for the generation of the maximal entanglement between the two qubits requires $a = \frac{1}{\sqrt{2}}$, which is fulfilled by the preparation method of the first qubit, and $\text{Tr}(U^\dagger\rho_B) = \text{Tr}(U\rho_B) = 0$. Parameterizing $U$ as a general $2 \times 2$ unitary matrix, we get
\begin{equation}
U = e^{i\alpha}\begin{pmatrix} e^{i\beta_2}\cos\beta_1 & e^{i\beta_3}\sin\beta_1 \\ -e^{-i\beta_3}\sin\beta_1 & e^{-i\beta_2}\cos\beta_1 \end{pmatrix} \label{eq:unitarymatrix}
\end{equation}
and putting this in $\text{Tr}(U^\dagger\rho_B) = \text{Tr}(U\rho_B) = 0$ yields
\begin{eqnarray}
\cos\beta_1\cos\beta_2 = 0,\nonumber\\
(2b\sqrt{1-b^2})\sin\beta_1\sin(\phi+\beta_3) = \nonumber \\ \cos\beta_1\sin\beta_2(1-2b^2). \label{eq:8}
\end{eqnarray}

In the above equation, we have used $\rho_B = |q_2\rangle \langle q_2|$, where $|q_2\rangle = |q_2(b,\phi)\rangle = b|0\rangle + e^{i\phi}\sqrt{1-b^2}|1\rangle$. Considering $\cos \beta_1 = 0$ results to $b \in \{0,1\}$ or a phased condition $\phi = n\pi - \beta_3$. Considering $\cos\beta_1 \neq 0 ~\text{and}~ \cos\beta_2 = 0$, and setting $b = \sin\gamma$ in this equation, we get:
\begin{equation}
\tan\beta_1\sin(\phi+\beta_3) = \cot(2\gamma)\sin\beta_2.
\end{equation}
Because a valid matching pair $(b, \phi)$ can always be computed for any choice of $(\beta_1,\beta_2, \beta_3)$, one can always find a suitable state $|q_2 (b,\phi)\rangle$ to create maximum entanglement between the target and the control given the condition of $\cos \beta_1 \cos \beta_2 = 0$ in Eq.(\ref{eq:8}) i.e. $tr(U)=0$ is satisfied. 

Conversely, suppose $|q_2\rangle$ leads to maximal entanglement, so that $\mathrm{Tr}(U\rho_B)=0$. Multiplying out Eq.~(\ref{eq:unitarymatrix}) against $\rho_B=|q_2\rangle\langle q_2|$ for general $b,\phi$ gives,
\begin{eqnarray}
\mathrm{Tr}(U\rho_B) &=& e^{i\alpha}\Big[\cos\beta_1\cos\beta_2 \nonumber \\
&+& i\big(\cos\beta_1(2b^2-1)\sin\beta_2 \nonumber \\
&+& 2b\sqrt{1-b^2}\sin\beta_1\sin(\phi+\beta_3)\big)\Big],
\end{eqnarray}
whose real part, $\cos\beta_1\cos\beta_2$, is independent of $b$ and $\phi$ and it is precisely the first line of Eq.~(\ref{eq:8}), which therefore holds for every target state. Since $\mathrm{Tr}(U) = 2e^{i\alpha}\cos\beta_1\cos\beta_2$ by Eq.~(\ref{eq:unitarymatrix}), this gives
\begin{equation}
\mathrm{Re}\big[e^{-i\alpha}\mathrm{Tr}(U\rho_B)\big] = \cos\beta_1\cos\beta_2 = \frac{\mathrm{Tr}(U)}{2e^{i\alpha}}
\end{equation}
for every $(b,\phi)$. If $\mathrm{Tr}(U)\ne0$, this real part is a fixed nonzero number regardless of the target state, so $\mathrm{Tr}(U\rho_B)\ne0$ for every $|\psi\rangle$, and maximal entanglement is unreachable. Hence $\mathrm{Tr}(U\rho_B)=0$ for some $|q_2\rangle$ forces $\mathrm{Tr}(U)=0$.

\end{proof}

To complete the analysis of controlled unitary gates in two-qubit systems, we relax the requirement that the target qubit experiences an identity operation when the control qubit is in $|0\rangle$ state. We introduce Generalised Controlled Unitary (GCU) gate, which functions in such a way that two unitary operators $U_0$ and $U_1$ act on the target when the control is in $|0\rangle$ and $|1\rangle$ states, respectively, i.e., 
\begin{equation}
GCU = |0\rangle\langle0| \otimes U_0 + |1\rangle\langle1| \otimes U_1.
\end{equation}
Applying this gate between the control and target qubits and tracing out the target system yields,
\begin{equation}
\zeta = a^2(1-a^2)\left[1 - \text{Tr}(U_0\rho_B U_1^\dagger)\text{Tr}(U_1\rho_B U_0^\dagger)\right], \label{eq:zetafortwoqubit}
\end{equation}
where $\rho_B = |q_2\rangle \langle q_2|$. 

\begin{theorem}[Relative Unitary Theorem]
\label{traceless_unitary}
 A Generalised Controlled Unitary gate utilizing the operator pair $(U_0, U_1)$ creates the exact same quantitative entanglement as a standard controlled unitary gate using the relative unitary operator $U_{rel} = U_1^\dagger U_0$ or $U_{rel} = U_0^\dagger U_1$.
\end{theorem}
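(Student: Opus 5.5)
The plan is to show that the GCU gate and the ordinary controlled gate built from $U_{rel}$ differ only by a local unitary on the target. Local unitaries do not change the control's reduced state, so they do not change $\zeta$. We can write
\begin{equation}
GCU = |0\rangle\langle0|\otimes U_0 + |1\rangle\langle1|\otimes U_1 = (I\otimes U_1)\big(|0\rangle\langle0|\otimes U_1^\dagger U_0 + |1\rangle\langle1|\otimes I\big).\nonumber
\end{equation}
The bracket is a controlled gate that applies $U_1^\dagger U_0$ when the control is $|0\rangle$. Conjugating by $X$ on the control turns it into the standard form $|0\rangle\langle0|\otimes I + |1\rangle\langle1|\otimes U_1^\dagger U_0$. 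Factoring out $I\otimes U_0$ on the left instead gives the same structure with $U_0^\dagger U_1$ acting on the $|1\rangle$ branch, already in standard form:
\begin{equation}
GCU=(I\otimes U_0)\big(|0\rangle\langle0|\otimes I + |1\rangle\langle1|\otimes U_0^\dagger U_1\big).\nonumber
\end{equation}

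First, consider $U_{rel}=U_0^\dagger U_1$. The output of GCU is $(I\otimes U_0)$ applied to the output of $CU_{rel}$ on the same input. Tracing out the target removes $U_0$, so $\rho_1$ is identical and so is $\zeta$. The same conclusion follows directly from Eq.~(\ref{eq:zetafortwoqubit}): by cyclicity, $\text{Tr}(U_0\rho_B U_1^\dagger)=\text{Tr}(U_1^\dagger U_0\rho_B)=\text{Tr}(U_{rel}^\dagger\rho_B)$. Similarly, $\text{Tr}(U_1\rho_BU_0^\dagger)=\text{Tr}(U_{rel}\rho_B)$. This reproduces Eq.~(\ref{eq:6}) with $U=U_{rel}$.

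Second, consider $U_{rel}=U_1^\dagger U_0=(U_0^\dagger U_1)^\dagger$. Here the product $\text{Tr}(U_{rel}^\dagger\rho_B)\,\text{Tr}(U_{rel}\rho_B)$ is symmetric under $U_{rel}\to U_{rel}^\dagger$; it equals $|\text{Tr}(U_{rel}\rho_B)|^2$. So Eq.~(\ref{eq:6}) gives the same $\zeta$ for the same input $a,\rho_B$. Equivalently, at the operator level, the swap $|0\rangle\leftrightarrow|1\rangle$ on the control maps $a\to\sqrt{1-a^2}$, which leaves the prefactor $a^2(1-a^2)$ unchanged. Finally, $\zeta=\det\rho_1$ determines the eigenvalues $\lambda_\pm$, and hence any entanglement measure of the pure output. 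The phrase ``exact same quantitative entanglement'' then follows.

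The only delicate point is the interpretation of the claim for the $U_1^\dagger U_0$ version. At the operator level the control is relabelled, so equality holds for the same $\zeta$ formula but only after the control input is transformed by $X$. I will present the formula-level argument via Eq.~(\ref{eq:6}) as the main proof, since it compares the same inputs directly, and give the operator factorisation as the conceptual explanation.
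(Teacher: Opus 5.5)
Your proposal is correct, and its main line, which uses cyclicity of the trace to turn $\text{Tr}(U_0\rho_BU_1^\dagger)\,\text{Tr}(U_1\rho_BU_0^\dagger)$ into $\text{Tr}(U_{rel}^\dagger\rho_B)\,\text{Tr}(U_{rel}\rho_B)$ and then match Eq.~(\ref{eq:6}), is the paper's proof. You also make explicit, via the product equalling $|\text{Tr}(U_{rel}\rho_B)|^2$, why both choices of $U_{rel}$ give the same result, which the paper only asserts; your factorisation $GCU=(I\otimes U_0)\,CU_{U_0^\dagger U_1}$ is a useful extra the paper does not use, since it shows directly that the reduced control state itself coincides, not just $\zeta$.
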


\begin{proof}
By exploiting the cyclic permutation property of the trace operator i.e. $\text{Tr}(AB)=\text{Tr}(BA)$, we rewrite Eq.~(\ref{eq:zetafortwoqubit}) as:
\begin{equation}
\zeta = a^2(1-a^2)\left[1 - \text{Tr}(U_1^\dagger U_0\rho_B)\text{Tr}((U_1^\dagger U_0)^\dagger\rho_B)\right].
\end{equation}
Comparing with Eq.~(\ref{eq:6}), we can say that this is the same entanglement created by the operator $|0\rangle \langle 0| \otimes I + |1\rangle \langle 1| \otimes U_{rel}$, which is
\begin{equation}
\zeta = a^2(1-a^2)\left[1 - \text{Tr}(U_{rel}\rho_B)\text{Tr}(U_{rel}^\dagger\rho_B)\right],
\end{equation}
where $U_{rel}$ is either $U_1^\dagger U_0$ or $U_0^\dagger U_1$.
\end{proof}

An immediate consequence of Theorem 2 is that if the relative operator $U_1^\dagger U_0$ is traceless ($\text{Tr}(U_1^\dagger U_0) = 0$), the GCU gate can generate a maximally entangled state between the qubits given optimal input states. Furthermore, this entanglement is invariant under the swapping $U_0 \leftrightarrow U_1$ i.e.
\begin{equation}
GCU_{\text{swapped}} = |0\rangle\langle0| \otimes U_1 + |1\rangle\langle1| \otimes U_0.
\end{equation}

\section{Controlled unitary gates in Multi-Qubit Systems}
\label{sec:3}

\subsection{Multi-qubit GCU architecture}
We extend our analysis of GCU gates to multi-qubit systems, where the control and target registers contain $m$ and $n$ qubits, respectively. The registers are initialized in the pure states $\rho_c$ and $\rho_t$. To construct a GCU gate that will act between the registers, we require $2^m$ target unitary operators $\{U_i\}$, each of dimension $2^n \times 2^n$:
\begin{equation}
GCU = \sum_{i=0}^{2^m-1} |i\rangle\langle i| \otimes U_i,
\end{equation}
where $\{|i\rangle\}$ represents the set of computational basis states of the control register. To get the usual multi qubit controlled unitary gate, all the $\{ U_i \}$ except the desired one, need to be set to identity. In \cite{Zanardi_2000}, the authors discuss GCU gates but for equal dimensional target and control registers. The reduced density matrix of the control register after the gate operation can be written as,
\begin{equation}
\rho_c^{\mathrm{final}} = \sum_{i,j=0}^{2^m-1} |i\rangle\langle j| \rho_c^{ij} \text{Tr}(U_i \rho_t U_j^\dagger), \label{eq:15}
\end{equation}
where the terms $\rho_c^{ij} = \langle i|\rho_c|j\rangle$ signify the coherence of the initial states of the control register.  

\subsection{Generalization of the function $\zeta$}

From the expression of $\zeta$ for two qubit GCU case, we can generalize the function from Eq.~(\ref{eq:zetafortwoqubit}) to multi qubit systems as:
\begin{equation}
\zeta = \sum_{i<j} p_i p_j \left[1 - \text{Tr}(U_i \rho_t U_j^\dagger)\text{Tr}(U_j \rho_t U_i^\dagger)\right], \label{eq:16}
\end{equation}
where $p_i = \rho_c^{ii}$ represents the probability of finding the initial control register in the state $|i\rangle$.
This equation represents a weighted sum over the unitary distinguishability of all unique pairs. A spatial permutation of the gate elements $\{U_i\}$ that keeps $\zeta$ invariant does not affect the entanglement between the target and control registers. If the control register is prepared in a uniform superposition ($p_i = \frac{1}{2^m} \, \forall i$) of the basis states, the entanglement yield is completely invariant under any spatial permutation of the gate elements $\{U_i\}$ within the circuit architecture.

We express the purity of the reduced control state directly
in terms of $\zeta$. Since the control register is prepared in a pure
state, $\rho_c^{ij}=c_ic_j^*$, with $\{c_i\}$ being the amplitude of the ith state, so from Eq.(\ref{eq:16}), $[\rho_c^{\mathrm{final}}]_{ij}=c_ic_j^*\Lambda_{ij}$ with
$\Lambda_{ij}=\mathrm{Tr}(U_i\rho_tU_j^\dagger)$. Noting that
$\Lambda_{ii}=\mathrm{Tr}(U_i\rho_tU_i^\dagger)=\mathrm{Tr}(\rho_t)=1$ and
$\Lambda_{ji}=\Lambda_{ij}^*$, a direct
computation gives,
\begin{align}
\gamma = \mathrm{Tr}\!\left[(\rho_c^{\mathrm{final}})^2\right]
&= \sum_{i,j}p_ip_j\,\Lambda_{ij}\Lambda_{ji} \nonumber\\
&= \sum_i p_i^2 + 2\sum_{i<j}p_ip_j|\Lambda_{ij}|^2 .
\end{align}
Using $\left(\sum_ip_i\right)^2=1$ and $\sum_ip_i^2=1-2\sum_{i<j}p_ip_j$, we finally get,
\begin{equation}
\gamma = 1-2\sum_{i<j}p_ip_j\left(1-|\Lambda_{ij}|^2\right) = 1-2\zeta .
\label{eq:purity-zeta}
\end{equation}
This identity is used again in Section (\ref{sec:4}) while connecting $\zeta$ to the established measures of entanglement.

\begin{remark}
This computation shows that $\gamma$, and hence $\zeta$, depends on the initial state of the control register only through the populations $\{p_i\}$, and not through any relative phase between the
control amplitudes $c_i=\sqrt{p_i}\,e^{i\theta_i}$. Indeed, $\rho_c^{ij}\rho_c^{ji}=(c_ic_j^*)(c_jc_i^*)=p_ip_j\,e^{i(\theta_i-\theta_j)}e^{i(\theta_j-\theta_i)}=p_ip_j$ identically, for any choice of $\theta_i,\theta_j$. The entanglement generated by a GCU gate is therefore fully determined by the probability distribution of the control register's basis states, independent of the relative phases between them beyond their magnitudes. This fact is consistent with $\zeta$'s dependence only in terms of $\{p_i\}$ and not in terms of the control phases $\theta_i$.
\end{remark}

We now bound $\gamma$, and hence $\zeta$, using two standard facts.

\textit{Schmidt rank bound:} Let $|\Psi\rangle\in\mathcal H_c\otimes\mathcal H_t$ be a pure state with
$\dim\mathcal H_c=2^m$ and $\dim\mathcal H_t=2^n$. Then the reduced state
$\rho_c=\mathrm{Tr}_t(|\Psi\rangle\langle\Psi|)$ has rank at most $2^d$,
where $d=\min(m,n)$. \cite{Nielsen_Chuang_2010}

\textit{Minimum purity at fixed rank:} If $\rho$ is a density matrix of rank $r$ with eigenvalues
$\lambda_1,\dots,\lambda_r>0$ summing to $1$, then
$\mathrm{Tr}(\rho^2)\ge 1/r$, with equality iff $\lambda_k=1/r$ for all
$k$. \cite{Nielsen_Chuang_2010}

\begin{theorem}[Universal upper bound on $\zeta$]
\label{thm:zetamax}
For a GCU gate acting between an $m$-qubit control register and an
$n$-qubit target register, both prepared in pure states,
\begin{equation}
0\;\le\;\zeta \;\le\; \zeta_{\max} := \frac{2^d-1}{2^{d+1}}, \qquad d=\min(m,n),
\label{eq:zetamax}
\end{equation}
and this bound is tight.
\end{theorem}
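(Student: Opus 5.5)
The plan is to work entirely through the purity identity $\gamma = 1-2\zeta$ from Eq.~(\ref{eq:purity-zeta}) together with the two facts just recalled. For the lower bound, observe that each summand in Eq.~(\ref{eq:16}) is $p_ip_j(1-|\Lambda_{ij}|^2)$. Here $\Lambda_{ij}=\mathrm{Tr}(U_i\rho_tU_j^\dagger)=\langle\psi_t|U_j^\dagger U_i|\psi_t\rangle$ is an inner product of two unit vectors, so Cauchy--Schwarz gives $|\Lambda_{ij}|\le 1$, and therefore $\zeta\ge 0$. Equivalently, $\gamma\le 1$.

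For the upper bound, the output $|\Psi\rangle = GCU(|\psi_c\rangle\otimes|\psi_t\rangle)$ is a pure state on $\mathcal H_c\otimes\mathcal H_t$. The Schmidt rank bound then gives $\mathrm{rank}(\rho_c^{\mathrm{final}})=r\le 2^d$. The minimum-purity fact gives $\gamma\ge 1/r\ge 2^{-d}$. Substituting into $\gamma=1-2\zeta$ yields
\begin{equation}
\zeta=\frac{1-\gamma}{2}\le\frac{1-2^{-d}}{2}=\frac{2^d-1}{2^{d+1}}. \nonumber
\end{equation}
This step is routine once Eq.~(\ref{eq:purity-zeta}) is in hand.

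The main obstacle is tightness: I must exhibit a GCU gate and inputs achieving $\gamma=2^{-d}$. The approach is to choose $2^d$ control basis states with equal weight and make the relevant $\Lambda_{ij}$ vanish. Concretely, prepare the control in a uniform superposition over the first $2^d$ basis states $|i\rangle$, $i=0,\dots,2^d-1$ (for example, Hadamards on $d$ qubits and $|0\rangle$ on the remaining $m-d$). This gives $p_i=2^{-d}$ on those states and zero elsewhere. Take $\rho_t=|0\rangle\langle0|$ on the target, and let $U_i$ be a permutation unitary (a product of $X$ gates) mapping $|0\rangle$ to the computational basis state $|i\rangle_t$; these are distinct since $2^d\le 2^n$. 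Set the remaining $U_i$ arbitrarily, say to $I$, since they carry zero weight. Then $\Lambda_{ij}=\langle i|j\rangle_t=\delta_{ij}$, so Eq.~(\ref{eq:16}) gives
\begin{equation}
\zeta=\sum_{i<j<2^d}2^{-2d}=\binom{2^d}{2}2^{-2d}=\frac{2^d-1}{2^{d+1}}, \nonumber
\end{equation}
which saturates the bound.

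As a check, with $m=n=1$ this reduces to the CNOT with a $|+\rangle$ control and a $|0\rangle$ target, giving $\zeta=1/4$, consistent with Section~\ref{sec:2}. I would also note the saturation condition: equality holds iff $r=2^d$ and the spectrum of $\rho_c^{\mathrm{final}}$ is flat. In terms of $\zeta$'s ingredients, this is guaranteed in particular when $2^d$ populations are equal and the vectors $U_i|\psi_t\rangle$ on their support are pairwise orthogonal.
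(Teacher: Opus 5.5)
Your proof is correct and follows the paper's argument. The upper bound comes from the Schmidt-rank bound and the minimum-purity-at-fixed-rank fact, passed through $\gamma = 1-2\zeta$, and tightness comes from $2^d$ equally weighted control branches whose unitaries send the target to mutually orthogonal states. Your Cauchy--Schwarz argument for $\zeta\ge 0$ and your choice of $X$-product unitaries acting on $|0\rangle$ just make explicit what the paper leaves implicit or states abstractly via transitivity of the unitary group.
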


\begin{proof}
By Schmidt rank bound and minimum purity at fixed rank, $\rho_c^{\mathrm{final}}$ has rank
$r\le2^d$, and for fixed rank $r$ the purity
$\gamma$ attains its minimum value $1/r$ when the $r$ nonzero
eigenvalues are equal. Since $1/r$ is decreasing in $r$, the global
minimum over all admissible ranks occurs at the largest allowed rank,
$r=2^d$, giving $\gamma_{\min}=2^{-d}$. By Eq.~(\ref{eq:purity-zeta}),
$\zeta=\tfrac12(1-\gamma)$ is maximized exactly when $\gamma$ is
minimized, so
\begin{equation}
\zeta_{\max} = \frac{1-\gamma_{\min}}{2} = \frac{1-2^{-d}}{2}
= \frac{2^d-1}{2^{d+1}} .
\end{equation}

\emph{Tightness.} Fix any target state $|\psi_t\rangle$. Since
$2^d\le2^n=\dim\mathcal H_t$, choose an orthonormal set
$\{|f_k\rangle\}_{k=1}^{2^d}\subset\mathcal H_t$ and, for each
$k=1,\dots,2^d$, a unitary $U_k$ satisfying $U_k|\psi_t\rangle=|f_k\rangle$;
such a unitary always exists because the unitary group acts transitively
on unit vectors. Assign control populations $p_k=2^{-d}$ for
$k=1,\dots,2^d$, with any remaining control branches unpopulated. Then
$\Lambda_{kl}=\langle f_l|f_k\rangle=\delta_{kl}$ for $k,l\le2^d$, so
\begin{equation}
\zeta = \sum_{k<l\le2^d}2^{-d}\cdot2^{-d}\cdot1
= \binom{2^d}{2}2^{-2d} = \frac{2^d-1}{2^{d+1}} ,
\end{equation}
saturating Eq.~(\ref{eq:zetamax}).
\end{proof}

Saturating the bound only requires that $2^d$ control branches be populated with equiprobability $2^{-d}$, mapping the target state to mutually orthogonal images; when $m>n$, achieving $\zeta_{\max}$ does
\emph{not} require a uniform superposition over all $2^m$ control basis states.

A uniform superposition over \emph{all} $2^m$ control states instead
requires the stronger condition
$\mathrm{Tr}(U_i^\dagger U_j)=0$ for all $i\ne j$. This condition
is considerably more restrictive. To satisfy this, any set of $2^m$ pairwise
Hilbert Schmidt orthogonal unitaries can contain at most one identity
operator, since $\mathrm{Tr}(I^\dagger I)=2^n\ne0$. Conventional multi controlled gates with a single nontrivial unitary
conditioned on one control branch and identity on all others,
generically violate this and therefore cannot saturate the bound under a
uniform control distribution, although they may still saturate it under
a suitably chosen non-uniform one.

When the control and target registers are dimensionally asymmetric, i.e., $dim(\mathcal{H}_c) \neq dim(\mathcal{H}_t)$, where $\mathcal{H}_c$ and $\mathcal{H}_t$ are the Hilbert space of control and target registers respectively, the maximum achievable entanglement is strictly bounded by the dimension of the smaller subsystem. As $d \rightarrow \infty$, the maximum value of the function asymptotically approaches $\lim_{d\rightarrow\infty}\zeta_{max} = 1/2$. This allows us to define a normalized ratio $\zeta_r$, bounded between 0 and 1, as:
\begin{equation}
\zeta_r = \frac{\zeta}{\zeta_{max}} = \frac{2^{d+1}}{2^d - 1}\zeta .
\end{equation}

\subsection{Example: The Toffoli Gate}

As a concrete illustration of Theorem 3 and its achievability
construction, consider the Toffoli gate on a 2-qubit control register
and a 1-qubit target: $U_{00}=U_{01}=U_{10}=I$, $U_{11}=X$, so the
target qubit flips iff both control qubits are $|1\rangle$. Since
$\mathrm{Tr}(I\rho_tI^\dagger)=\mathrm{Tr}(\rho_t)=1$, every pair drawn
from $\{00,01,10\}$ contributes nothing to Eq.~(15); only pairs
involving the $|11\rangle$ branch survive, giving
\begin{equation}
\zeta(p_{11},b,\phi) = p_{11}(1-p_{11})\left[1-4b^2(1-b^2)\cos^2\phi\right],
\label{eq:toffoli-zeta}
\end{equation}
where $p_{11}$ is the probability of the initial control state being in $|11\rangle$, $b,\phi$ parameterize the target state as in Eq.~(\ref{eq:1}). This is the same
functional form as the CNOT result Eq.~(\ref{eq:4}), with $a^2(1-a^2)$ replaced
by $p_{11}(1-p_{11})$.

Here $d=\min(m,n)=\min(2,1)=1$, so by Theorem 3, to achieve maximum entanglement between the two registers, initial control state requires only $2^d=2$ populated control branches, each
with weight $2^{-d}=1/2$: we set $p_{11}=1/2$ (with the remaining weight distributed arbitrarily over the other three branches, which do not affect $\zeta$), giving $\zeta_{\max}=1/4$ and hence, writing
$u:=b^2\in[0,1]$,
\begin{equation}
\zeta_r(u,\phi) = \frac{\zeta}{\zeta_{\max}} = 1-4u(1-u)\cos^2\phi ,
\label{eq:toffoli-zetar}
\end{equation}
bounded in $[0,1]$ as in Eq.~(21).

Fig.~\ref{fig:toffoli-heatmap} displays $\zeta_r(u,\phi)$ in polar
form, with radial coordinate $u=b^2$ and angular coordinate $\phi$, so
that each point sits at Cartesian position $(b^2\cos\phi,\,b^2\sin\phi)$.
Two isolated minima, $\zeta_r=0$, occur at $u=1/2$ ($b=1/\sqrt2$),
$\phi\in\{0,\pi\}$ -- the target prepared in an eigenstate of $X$ --
appearing as two symmetric blue lobes,
while $\zeta_r$ reaches its maximum of $1$ on the outer boundary
$u=1$, at the center $u=0$, and along the vertical axis
$\phi\in\{\pi/2,3\pi/2\}$ for any $u$. The resulting two-fold
symmetry of the pattern directly reflects the $\cos^2\phi$ dependence
in Eq.~(\ref{eq:toffoli-zetar}).

\begin{figure}[htbp]
\centering
\includegraphics[width=\columnwidth]{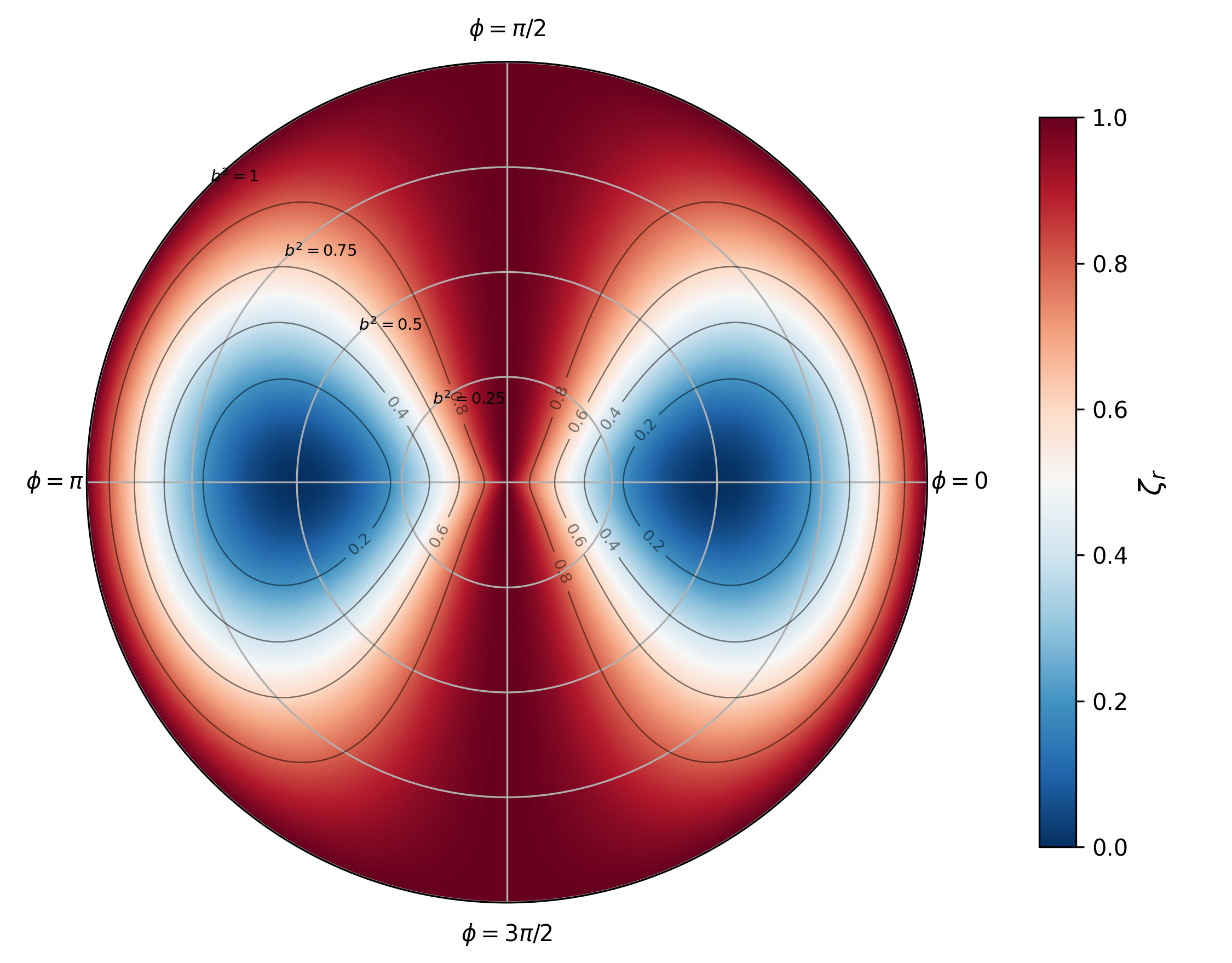}
\caption{\justifying{$\zeta_r(b^2,\phi)$ for the Toffoli gate, Eq.~(\ref{eq:toffoli-zetar}),
at fixed $p_{11}=1/2$, plotted in polar coordinates with radius $b^2$
and angle $\phi$. Blue regions indicate separable control-target
output ($\zeta_r\to0$); red regions indicate maximal entanglement
($\zeta_r=1$). Contour lines are drawn at $\zeta_r=0.2,0.4,0.6,0.8$.}}
\label{fig:toffoli-heatmap}
\end{figure}

\section{Quantification of entanglement}
\label{sec:4}
To validate $\zeta$ as a legitimate metric for quantifying the entanglement generated between the control and target registers after the action of a controlled unitary gate, we connect it to established entanglement measures. 

\textbf{Purity($\gamma$):} Eq. (\ref{eq:purity-zeta}) gives the analytical relation between $\zeta$ and purity ($\gamma$):

\begin{equation}
\gamma = 1 - 2\zeta. \label{eq:21}
\end{equation}

\textbf{Linear Entropy ($S_L$):} The standard definition of normalized linear entropy ($S_L$) for quantum states spanning a $2^d$-dimensional Hilbert space is given by $S_L = \frac{2^d}{2^d - 1}(1 - \gamma)$ \cite{linearentropy1,linearentropy2}. Substituting the expression of purity as derived in Eq.(\ref{eq:purity-zeta}) into this definition yields:
\begin{equation}
S_L = \frac{2^d}{2^d - 1}(1 - (1 - 2\zeta)) = \frac{2^{d+1}}{2^d - 1}\zeta = \zeta_r.
\end{equation}
Clearly, this equation indicates that the ratio $\zeta_r$ derived by us signifies the normalized linear entropy of reduced state of the control register. 

\textbf{Concurrence:} For two-qubit pure entangled states, the function $\zeta$ is the determinant of the reduced density matrix of the control qubit and is related to concurrence $C$ as \cite{Wootters_1998,Wootters_Hill,Wootters_Coffman_2000}
\begin{equation}
C = 2\sqrt{\zeta}.
\end{equation}

\textbf{Von Neumann entropy:} The von Neumann entropy of the reduced state of the control register is a measure of entanglement between the two subsystems. For two-qubit systems with one control and one target qubit, the relation between von Neumann entropy and $\zeta$ is the following, 

\begin{equation}
S(\rho) = -[\lambda_+ \log(\lambda_+) + \lambda_- \log(\lambda_-)],
\end{equation}
where $\lambda_\pm = \frac{1\pm \sqrt{1-4\zeta}}{2}$. 
However, for multi-qubit registers, the relation is not anymore straight forward as $\zeta$ neither is a simple determinant nor directly connected with the eigenvalues of the reduced state of the target register. Nonetheless, $\zeta$ can be related to the von Neumann entropy $S(\rho)$ up to an approximation. Expanding the expression for the entropy, we get, 

\begin{eqnarray}
S(\rho) &=& -\rho~\mbox{ln}\rho\nonumber\\ 
&=&\sum_{i=1}^{\infty} \sum_{k=0}^{i} (-1)^k \frac{(i-1)!}{k!(i-k)!}\text{Tr}(\rho^{k+1}). \label{entropyexpansion}
\end{eqnarray}
For the series to converge to the desired value, the sum over $k$ needs to be calculated before the sum over $i$ \cite{rudin1976, boas2005mathematical}. Writing down the series up to second order gives,

\begin{equation}
    S(\rho) \approx (1-\text{Tr}(\rho^2)) + (\frac{1}{2} - \text{Tr}(\rho^2) + \frac{1}{2} \text{Tr}(\rho^3)) + ... \label{eq:26}
\end{equation}
As $\zeta$ is directly related with the purity of the reduced density matrix, we can easily relate it to an approximate $S(\rho)$. However, to get a more accurate approximation we will need to calculate the trace of higher power of the reduced density matrix. In the same setup that is used throughout the paper, the general analytic form of $\text{Tr}(\rho^k)$ for any $k$ looks like,

\begin{equation}
    \text{Tr}(\rho^k) = \sum_{i_1,..,i_k} p_{i_1}p_{i_2}..p_{i_k} \Lambda_{i_1i_2}\Lambda_{i_2i_3}..\Lambda_{i_ki_1}, \label{eq:tracepowers}
\end{equation}
where $p_i = \rho_c^{ii}$, i.e., the probability of the control register to be in the $i^{th}$ state initially and $\Lambda_{ij}=\text{Tr}(U_i \rho_t U_j^\dagger)$ are the unitary distinguishability of the GCU gate with respect to the target state. Using this equation, the von Neumann entropy of one of the registers can be approximated up to any order in the GCU setup. 

\textbf{Approximation of von Neumann Entropy in terms of $\zeta$:} For a control register consisting of a single qubit (Number of eigenvalues of the density matrix of the control register $=M=2$), the state $\rho_c^{\mathrm{final}}$ has exactly two eigenvalues
$\lambda_+,\lambda_-$, and in this special case, the von Neumann entropy can be approximated in terms of $\zeta$ only. Recall from Eq.~(\ref{eq:purity-zeta}) that $\zeta$ is exactly the second elementary symmetric polynomial of these eigenvalues. For a set of eigenvalues $\{\lambda_1,\dots,\lambda_M\}$, the $k$-th elementary symmetric polynomial is defined as
\begin{equation}
h_k := \sum_{i_1<\cdots<i_k}\lambda_{i_1}\cdots\lambda_{i_k},
\qquad h_0:=1,
\end{equation}
The sum of all products of $k$ distinct eigenvalues, in particular $h_1=\mathrm{Tr}\,\rho$ and $h_2=\sum_{i<j}\lambda_i\lambda_j$. In this notation,
\begin{equation}
\zeta = h_2(\rho),
\end{equation}
since $h_2=\tfrac12\big[(\mathrm{Tr}\,\rho)^2-\mathrm{Tr}(\rho^2)\big]=\tfrac12(1-\gamma)=\zeta$.

Newton's identities~\cite{newton} relate the power sums
$\mathrm{Tr}(\rho^k)$ of a matrix's eigenvalues to its elementary
symmetric polynomials $h_k$:
\begin{equation}
\mathrm{Tr}(\rho^k) = \sum_{i=1}^{k-1}(-1)^{i-1}h_i\,\mathrm{Tr}(\rho^{k-i})
+ (-1)^{k-1}kh_k, ~~ k\le M,
\end{equation}
with no extra final term when $k>M$. For $M=2$ there are only two
eigenvalues, so every elementary symmetric polynomial of order three
or higher vanishes identically, $h_3=h_4=\cdots=0$. Newton's
identities therefore collapse, for all $k\ge2$, to the two-term
recursion

\begin{equation}
\mathrm{Tr}(\rho^k) = \mathrm{Tr}(\rho^{k-1}) - \zeta\,\mathrm{Tr}(\rho^{k-2}),
\ \ \mathrm{Tr}(\rho^0)=2,~ \mathrm{Tr}(\rho^1)=1 .
\label{eq:newton-recursion-M2}
\end{equation}
Eq.~(\ref{eq:newton-recursion-M2}) gives every trace power
$\mathrm{Tr}(\rho^k)$ as an explicit polynomial in $\zeta$ alone, to any
order, without evaluating the multi-index sum of
Eq.~(\ref{eq:tracepowers}) or simulating the control-target state. The
first few are
\begin{align}
\mathrm{Tr}(\rho^2) &= 1-2\zeta, \nonumber\\
\mathrm{Tr}(\rho^3) &= 1-3\zeta, \nonumber\\
\mathrm{Tr}(\rho^4) &= 1-4\zeta+2\zeta^2, \nonumber\\
\mathrm{Tr}(\rho^5) &= 1-5\zeta+5\zeta^2 ,
\end{align}
with higher orders generated directly from Eq.~(\ref{eq:newton-recursion-M2}).

Substituting these into the entropy series of Eq.~(\ref{entropyexpansion}),
truncated at order $N$,
\begin{equation}
S_N(\rho) = \sum_{n=1}^{N}\sum_{k=0}^{n}(-1)^k\frac{(n-1)!}{k!(n-k)!}\,
\mathrm{Tr}(\rho^{k+1}) ,
\label{eq:entropy-truncated}
\end{equation}
now yields $S_N$ as an explicit closed-form polynomial in $\zeta$ at any
chosen order. In particular, truncating at $N=2$ and retaining the exact
relation $\mathrm{Tr}(\rho^3)=1-3\zeta$ gives,
\begin{equation}
S_2(\rho) = \frac{5}{2}\zeta.
\label{eq:S2-corrected}
\end{equation}

Fig.~\ref{fig:cnot-entropy} compares the exact von Neumann entropy
\begin{equation}
S(\rho) = -\lambda_+\log_2\lambda_+ - \lambda_-\log_2\lambda_- ,
~~ \lambda_\pm=\frac{1\pm\sqrt{1-4\zeta}}{2},
\end{equation}
reported in bits so that $S\in[0,1]$ over the full physical range
$\zeta\in[0,\zeta_{\max}]=[0,1/4]$, against the truncated series of
Eq.~(\ref{eq:entropy-truncated}) at orders $N=5$ and $N=10$, evaluated
using the Newton's-identity recursion of Eq.~(\ref{eq:newton-recursion-M2}).
The truncated series converges monotonically to the exact curve as $N$
increases, with the largest residual error at intermediate $\zeta$; both the truncation leads to the exact values at the endpoints $\zeta=0$ and
$\zeta=\zeta_{\max}$.

\begin{figure}[h]
\centering
\includegraphics[width=\columnwidth]{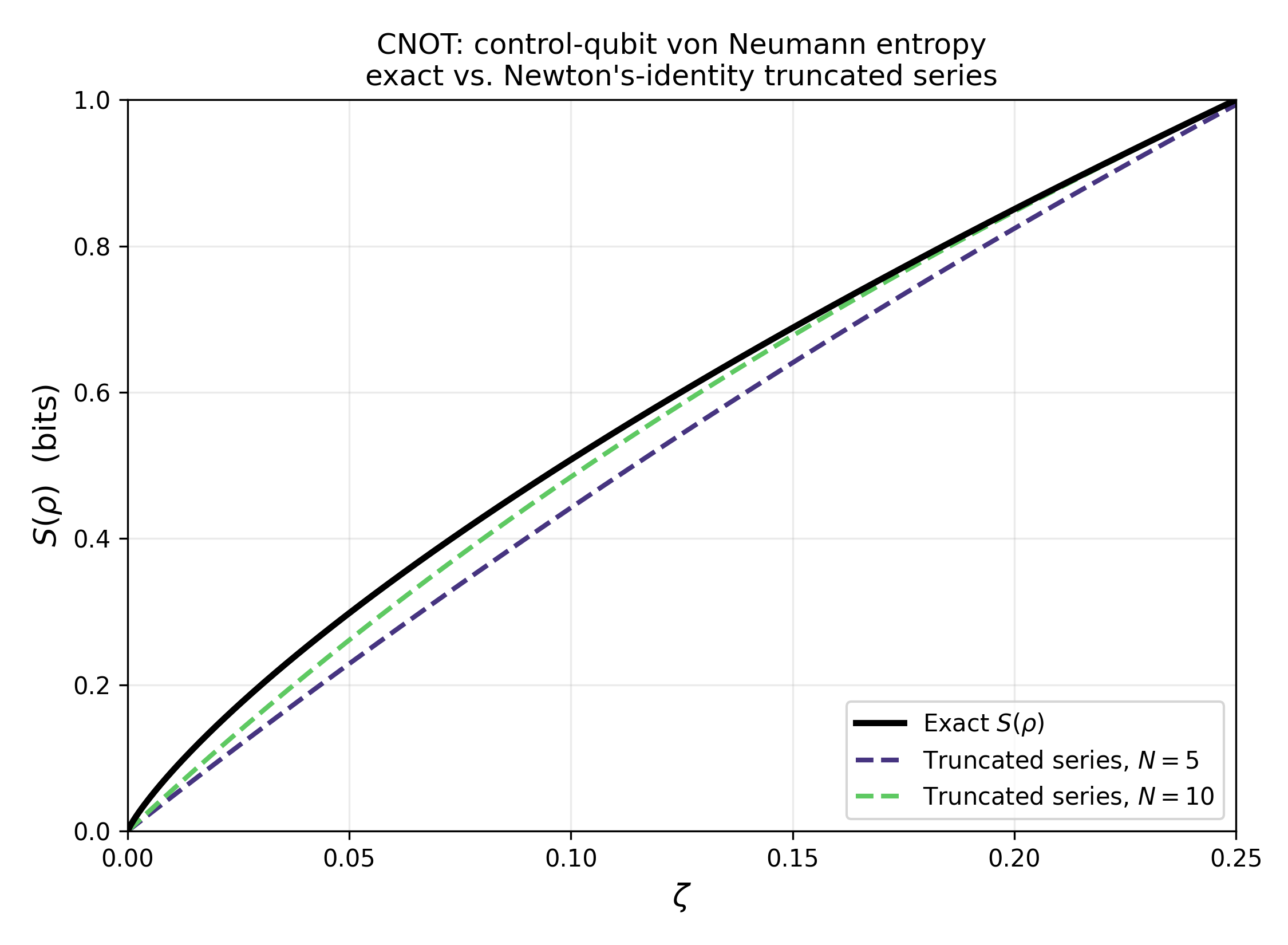}
\caption{\justifying{von Neumann entropy of the control-qubit reduced state for a
CNOT gate, as a function of $\zeta$. The exact entropy (solid black) is
compared against the truncated series of Eq.~(\ref{eq:entropy-truncated}),
evaluated via the Newton's-identity recursion of
Eq.~(\ref{eq:newton-recursion-M2}), at truncation orders $N=5$ and
$N=10$ (dashed). Entropy is reported in bits, so the physical range is
$S\in[0,1]$ over $\zeta\in[0,1/4]$.}}
\label{fig:cnot-entropy}
\end{figure}

\section{Relation to operator entangling power}
\label{sec:5}

Zanardi \textit{et al}~\cite{Zanardi_2000} introduced the entangling
power of a bipartite unitary $U$ acting on $\mathcal H=\mathcal
H_1\otimes\mathcal H_2$ ($\dim\mathcal H_1=d_1$, $\dim\mathcal H_2=d_2$)
as the average, over a distribution $p$ of product input states, of the
linear entropy produced:
\begin{eqnarray}
e_p(U) &:=& \overline{E(U|\psi_1\rangle\otimes|\psi_2\rangle)}^{\,\psi_1,\psi_2}, \nonumber \\
E(|\Psi\rangle) &:=& 1-\mathrm{Tr}_1(\rho^2),\ \ \rho:=\mathrm{Tr}_2|\Psi\rangle\langle\Psi| .
\label{eq:zanardi-def}
\end{eqnarray}
Identifying subsystem 1 with our control register and subsystem 2 with
the target register, $\rho$ in Eq.~(\ref{eq:zanardi-def}) is exactly our
$\rho_c^{\mathrm{final}}$ from Eq.~(\ref{eq:15}), and Eq.~(\ref{eq:purity-zeta}) gives
\begin{equation}
E(|\Psi\rangle) = 1-\gamma = 2\zeta .
\label{eq:E-is-2zeta}
\end{equation}
This holds for any pure product input and any GCU form unitary, and identifies Zanardi's linear entropy as exactly twice our $\zeta$.
Consequently
\begin{equation}
e_p(U) = 2\,\overline{\zeta}^{\,p} , \qquad e_{p_0}(U) = 2\,\overline{\zeta}^{\,p_0} ,
\end{equation}
where $p_0$ denotes the uniform (Haar) distribution over product states, and $\zeta$ on the right is evaluated at the sampled input. Explicit closed form for $\zeta$ in terms of
populations and target unitary overlaps allows this average to be evaluated directly for the GCU subclass, rather than through Zanardi's formalism.

\begin{lemma}
For normalized $|\psi\rangle$, two Haar averaged quantity for indices $i\ne j$, and for a unitary $V$ with dimension $D\times D$ are
\begin{eqnarray}
\overline{|\langle i|\psi\rangle|^2|\langle j|\psi\rangle|^2}
&=& \frac{1}{D(D+1)}, \nonumber \\
\overline{|\langle\psi|V|\psi\rangle|^2} &=& \frac{D+|\mathrm{Tr}\,V|^2}{D(D+1)} .
\label{eq:haar-moments}
\end{eqnarray}
\end{lemma}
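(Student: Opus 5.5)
The plan is to compute both averages from the second moment of the Haar-random state. The key identity is
\begin{equation}
\overline{|\psi\rangle\langle\psi|^{\otimes 2}} = \frac{\Pi_{\mathrm{sym}}}{\binom{D+1}{2}} = \frac{I\otimes I + \mathrm{SWAP}}{D(D+1)}.
\end{equation}
It follows from Schur's lemma. The operator $|\psi\rangle\langle\psi|^{\otimes 2}$ is supported on the symmetric subspace of $\mathbb{C}^D\otimes\mathbb{C}^D$. Its Haar average commutes with every $W\otimes W$, and the symmetric subspace is an irreducible representation of the unitary group. So the average must be proportional to the symmetric projector $\Pi_{\mathrm{sym}}=\tfrac12(I\otimes I+\mathrm{SWAP})$. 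The constant is fixed by the unit trace, using $\dim = D(D+1)/2$.

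Each requested quantity is then a single contraction against this operator.

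\textbf{First average.} Write $|\langle i|\psi\rangle|^2|\langle j|\psi\rangle|^2 = \mathrm{Tr}\big[(|i\rangle\langle i|\otimes|j\rangle\langle j|)\,|\psi\rangle\langle\psi|^{\otimes 2}\big]$. Substituting the identity gives
\begin{equation}
\overline{|\langle i|\psi\rangle|^2|\langle j|\psi\rangle|^2} = \frac{1+|\langle i|j\rangle|^2}{D(D+1)} = \frac{1}{D(D+1)},
\end{equation}
since $\mathrm{Tr}[(A\otimes B)\,\mathrm{SWAP}]=\mathrm{Tr}(AB)$ and $i\neq j$.

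\textbf{Second average.} Write $|\langle\psi|V|\psi\rangle|^2=\mathrm{Tr}\big[(V\otimes V^\dagger)\,|\psi\rangle\langle\psi|^{\otimes 2}\big]$. The identity part contributes $\mathrm{Tr}V\,\mathrm{Tr}V^\dagger=|\mathrm{Tr}V|^2$. The swap part contributes $\mathrm{Tr}(VV^\dagger)=D$. Dividing by $D(D+1)$ gives $(D+|\mathrm{Tr}V|^2)/D(D+1)$, which is the claimed result.

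\textbf{Sanity checks.} For $V=I$ the formula gives $1$, as it must. Summing the first formula over all $j$ (with the diagonal term included) should reproduce $\overline{|\langle i|\psi\rangle|^2}=1/D$. This requires $\overline{|\langle i|\psi\rangle|^4}=2/D(D+1)$, which the same contraction also yields.

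\textbf{Main obstacle.} The only step that is not routine is justifying the symmetric-projector identity. If invoking representation theory should be avoided, I would use an alternative route. Parametrize $|\psi\rangle=g/\|g\|$ with $g$ a standard complex Gaussian vector. Because $\|g\|$ is independent of the direction $g/\|g\|$, the fourth moments factor. Wick's theorem gives $\mathbb{E}[\bar g_a\bar g_b g_c g_d]=\delta_{ac}\delta_{bd}+\delta_{ad}\delta_{bc}$, and $\mathbb{E}\|g\|^4=D(D+1)$. Together these reproduce the identity directly, and hence both formulas.
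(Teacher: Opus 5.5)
Your proof is correct, but it takes a different route from the paper. The paper does not derive a moment operator. It quotes from the literature the scalar Haar moments $\overline{x_ix_j}=1/(D(D+1))$ for $i\neq j$ and $\overline{x_i^2}=2/(D(D+1))$, where $x_k=|\langle k|\psi\rangle|^2$, so the first identity is just the cited fact. For the second, it diagonalizes $V=\sum_k e^{i\theta_k}|k\rangle\langle k|$ and writes $\langle\psi|V|\psi\rangle=\sum_k x_k e^{i\theta_k}$. It then splits the average of the squared modulus into diagonal and off-diagonal sums, using $\sum_{k\neq l}e^{i(\theta_k-\theta_l)}=|\mathrm{Tr}\,V|^2-D$.

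You instead establish the full second-moment operator $(I\otimes I+\mathrm{SWAP})/(D(D+1))$, via Schur's lemma or Gaussian/Wick, and read off both identities as single contractions. The paper's argument is shorter and purely scalar once the cited moments are accepted. However, it relies on $V$ being normal, so that it has an orthonormal eigenbasis, and tacitly on unitary invariance of the Haar measure, to reuse the computational-basis moments in $V$'s eigenbasis.

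Your route has three advantages:
\begin{itemize}
\item It is self-contained: the paper's two cited moments are themselves contractions of your operator against $|i\rangle\langle i|\otimes|j\rangle\langle j|$ and $|i\rangle\langle i|^{\otimes 2}$.
\item It is basis-free, so no eigenbasis change is needed.
\item It is more general: it gives $\overline{|\langle\psi|V|\psi\rangle|^2}=(\mathrm{Tr}(V^\dagger V)+|\mathrm{Tr}\,V|^2)/(D(D+1))$ for an arbitrary operator $V$, with unitarity entering only through $\mathrm{Tr}(VV^\dagger)=D$.
\end{itemize}
The price is the extra input, either irreducibility of the symmetric subspace under $W\otimes W$ or the Gaussian fourth-moment computation, both of which you supply correctly.
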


\begin{proof}
For probabilities $x_k=|\langle k|\psi\rangle|^2$, two standard identities are used, $\overline{x_ix_j}=1/(D(D+1))$ ($i\ne j$) and $\overline{x_i^2}=2/(D(D+1))$ \cite{Mele_2024haar,geo_q_state,Choi_2023}. The first identity in
Eq.~(\ref{eq:haar-moments}) is immediate. For the second, diagonalize
$V=\sum_k e^{i\theta_k}|k\rangle\langle k|$ and write
$\langle\psi|V|\psi\rangle=\sum_kx_ke^{i\theta_k}$ with
$x_k=|\langle k|\psi\rangle|^2$ in $V$'s eigenbasis. Then
$\overline{|\langle\psi|V|\psi\rangle|^2}
=\sum_k\overline{x_k^2}+\sum_{k\ne l}\overline{x_kx_l}\,e^{i(\theta_k-\theta_l)}
=\frac{2}{D+1}+\frac{|\mathrm{Tr}\,V|^2-D}{D(D+1)}
=\frac{D+|\mathrm{Tr}\,V|^2}{D(D+1)}$,
using $\sum_{k\ne l}e^{i(\theta_k-\theta_l)}=|\mathrm{Tr}\,V|^2-D$.
\end{proof}

\begin{proposition}
\label{prop:ep0-gcu}
For a GCU gate with $M$ control branches (populations $p_i$, target
unitaries $U_i$, $i=1,\dots,M$) and $N$-dimensional target,
\begin{equation}
e_{p_0}(U) = \frac{2}{M(M+1)N(N+1)}\sum_{i<j}\left[N^2-|\mathrm{Tr}(U_j^\dagger U_i)|^2\right] .
\label{eq:ep0-gcu-exact}
\end{equation}
\end{proposition}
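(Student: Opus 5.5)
Since $e_{p_0}(U)=2\,\overline{\zeta}^{\,p_0}$ by Eq.~(\ref{eq:E-is-2zeta}), the plan is to average the closed form Eq.~(\ref{eq:16}) over Haar-random product inputs $|\psi_c\rangle\otimes|\psi_t\rangle$, with $|\psi_c\rangle$ uniform on $\mathbb C^M$ and $|\psi_t\rangle$ uniform on $\mathbb C^N$. The two factors are independent. In Eq.~(\ref{eq:16}) the control state enters only through the products $p_ip_j=|\langle i|\psi_c\rangle|^2|\langle j|\psi_c\rangle|^2$; this is the Remark's observation that phases drop out. The target state enters only through $|\Lambda_{ij}|^2$. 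Hence for each pair $i<j$ the average of a summand factorizes as $\overline{p_ip_j}\cdot\big(1-\overline{|\Lambda_{ij}|^2}\big)$.

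\textbf{Control factor.} The first identity of the Lemma, with $D=M$, gives $\overline{p_ip_j}=1/(M(M+1))$ for $i\neq j$, independently of the pair.

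\textbf{Target factor.} Rewrite $\Lambda_{ij}=\mathrm{Tr}(U_i\rho_tU_j^\dagger)=\langle\psi_t|U_j^\dagger U_i|\psi_t\rangle$ by cyclicity, as in the Relative Unitary Theorem. With $V=U_j^\dagger U_i$, which is unitary, the second identity of the Lemma with $D=N$ gives
\begin{equation}
\overline{|\Lambda_{ij}|^2}=\frac{N+|\mathrm{Tr}(U_j^\dagger U_i)|^2}{N(N+1)}.
\end{equation}
Therefore
\begin{equation}
1-\overline{|\Lambda_{ij}|^2}=\frac{N^2-|\mathrm{Tr}(U_j^\dagger U_i)|^2}{N(N+1)}.
\end{equation}

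\textbf{Assembly.} Multiply the two factors, sum over $i<j$, and multiply by $2$. This yields exactly Eq.~(\ref{eq:ep0-gcu-exact}).

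The only step needing care is justifying the factorization of the average. It holds because the populations depend only on $\psi_c$ and the overlaps $\Lambda_{ij}$ depend only on $\psi_t$, and the Haar measure on product states is the product of the Haar measures on each factor. One should also note that the populations $p_i$ named in the statement are averaged over, so they do not appear in the final formula.

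The main obstacle I expect is making the Haar identities themselves legitimate, in particular the identity $\sum_{k\neq l}e^{i(\theta_k-\theta_l)}=|\mathrm{Tr}\,V|^2-D$, which requires $V$ to be normal. This is fine here since $U_j^\dagger U_i$ is unitary, but the point should be stated explicitly. The remaining work is routine algebra.
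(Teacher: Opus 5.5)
Your proposal is correct and follows essentially the same route as the paper. Both arguments factorize the Haar average of Eq.~(\ref{eq:16}) over the independent control and target inputs, evaluate $\overline{p_ip_j}=1/(M(M+1))$ and $\overline{|\Lambda_{ij}|^2}$ via the Lemma with $V=U_j^\dagger U_i$, then sum over pairs and use $e_{p_0}=2\overline{\zeta}^{\,p_0}$. Your note that $V$ must be unitary for the Lemma's diagonalization step is an appropriate caveat, and it holds here.
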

\begin{proof}
Since the control and target inputs are independently Haar-random, $\overline{\zeta}^{\,p_0}=\sum_{i<j}\overline{p_ip_j}\cdot\overline{1-|\Lambda_{ij}|^2}$. By Eq.~(\ref{eq:haar-moments}), $\overline{p_ip_j}=1/(M(M+1))$,
and with $V=U_j^\dagger U_i$, $\overline{|\Lambda_{ij}|^2}=\overline{|\langle\psi_t|V|\psi_t\rangle|^2}=(N+|\mathrm{Tr}(U_j^\dagger U_i)|^2)/(N(N+1))$, so $\overline{1-|\Lambda_{ij}|^2}=(N^2-|\mathrm{Tr}(U_j^\dagger U_i)|^2)/(N(N+1))$. Summing over pairs and using $e_{p_0}=2\overline\zeta^{\,p_0}$ gives
Eq.~(\ref{eq:ep0-gcu-exact}).
\end{proof}

\begin{corollary}[GCU-restricted bound]
\label{cor:gcu-bound}
For any GCU gate with $M$ control branches and $N$-dimensional target,
\begin{equation}
e_{p_0}(U) \le \frac{(M-1)N}{(M+1)(N+1)} ,
\label{eq:our-bound}
\end{equation}
with equality iff the target unitaries are pairwise Hilbert--Schmidt orthogonal, $\mathrm{Tr}(U_j^\dagger U_i)=0$ for all $i\ne j$. The same condition identified in \mbox{Section (\ref{sec:3})} for saturating $\zeta_{\max}$.
\end{corollary}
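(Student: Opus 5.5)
The plan is to start from the exact expression in Proposition~\ref{prop:ep0-gcu} and bound each pair term separately. Every summand $N^2-|\mathrm{Tr}(U_j^\dagger U_i)|^2$ is at most $N^2$, because $|\mathrm{Tr}(U_j^\dagger U_i)|^2\ge 0$. Equality in this step holds exactly when $\mathrm{Tr}(U_j^\dagger U_i)=0$.

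Summing over the $\binom{M}{2}=M(M-1)/2$ unordered pairs $i<j$ then gives
\begin{equation*}
e_{p_0}(U)\le \frac{2}{M(M+1)N(N+1)}\cdot\frac{M(M-1)}{2}\,N^2=\frac{(M-1)N}{(M+1)(N+1)},
\end{equation*}
which is Eq.~(\ref{eq:our-bound}). The cancellations of $M$ and of one factor of $N$ are routine.

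For the equality clause, I note that the pairwise bound is a sum of nonnegative slacks $|\mathrm{Tr}(U_j^\dagger U_i)|^2$. The total equals the bound iff every slack vanishes, that is, iff $\mathrm{Tr}(U_j^\dagger U_i)=0$ for all $i\neq j$; this gives both directions at once. Hermitian symmetry, $|\mathrm{Tr}(U_i^\dagger U_j)|=|\mathrm{Tr}(U_j^\dagger U_i)|$, makes the ordering within a pair irrelevant. Finally, I would remark that this is precisely the Hilbert--Schmidt orthogonality condition discussed after Theorem~\ref{thm:zetamax} for saturating $\zeta_{\max}$ under a uniform control distribution.

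The only genuine subtlety, and what I expect to be the main point to address, is whether equality is actually attainable. Attainability needs $M$ pairwise Hilbert--Schmidt orthogonal $N\times N$ unitaries, and these exist only when $M\le N^2$ (for example, generalized Pauli/clock-and-shift operators). For $M>N^2$ the ``iff'' is vacuously consistent: the bound is then strict, but the statement remains correct. I would add a brief sentence noting this, without changing the proof.
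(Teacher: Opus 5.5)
Your proposal is correct and matches the paper's proof: you start from the exact Haar-average formula of Proposition~\ref{prop:ep0-gcu}, bound each summand by $N^2$, count the $\binom{M}{2}$ pairs, and note that equality holds iff every overlap $\mathrm{Tr}(U_j^\dagger U_i)$ vanishes. Your added remark that equality is attainable only when $M\le N^2$ is also right, and it agrees with the paper's own discussion after the corollary.
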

\begin{proof}
Immediate from Eq.~(\ref{eq:ep0-gcu-exact}), each term
$N^2-|\mathrm{Tr}(U_j^\dagger U_i)|^2\le N^2$, with equality iff $\mathrm{Tr}(U_j^\dagger U_i)=0$. There are $\binom{M}{2}$ such pairs.
\end{proof}

Proposition~4 in Ref. \cite{Zanardi_2000} bounds $e_{p_0}(U)$ over \emph{all} $U\in U(d_1d_2)$ (with $d_1\leq d_2$) by $(d_2-d_2/d_1)/(d_2+1)$. Taking $M=d_1$ (using the smaller register as control) in Eq.~(\ref{eq:our-bound}), we get,
\begin{equation}
 e_{p_0}(U) = \frac{d_1}{d_1+1}\cdot\frac{(d_1-1)d_2}{d_1(d_2+1)} \;<\; \frac{d_2-d_2/d_1}{d_2+1}.
\end{equation}

\begin{table}[t]
\centering
\begin{tabular}{cccc}
\hline\hline
$(d_1,d_2)$ ~~~~& GCU bound ~~~~& Reference 5 ~~~~& Optimal \\
\hline
$(2,2)$ ~~~~& $2/9$ ~~~~& $1/3$ ~~~~& $2/9$ \\
$(2,3)$ ~~~~& $1/3$ ~~~~& $3/8$ ~~~~& $1/3$ \\
$(2,4)$ ~~~~& $2/5$ ~~~~& $2/5$ ~~~~& $2/5$ \\
$(3,4)$ ~~~~& $9/20$ ~~~~& $8/15$ ~~~~& $8/15$ \\
\hline\hline
\end{tabular}
\caption{\justifying{Comparison of the GCU-restricted bound Eq.~(\ref{eq:our-bound}) against the reported general bound and optimal values in Reference \cite{Zanardi_2000}. The GCU
construction (mutually orthogonal target unitaries, control on whichever register admits it) exactly reproduces the true optimum for $(2,2)$, $(2,3)$, and $(2,4)$ showing the controlled unitary architecture is not merely a good heuristic but optimal in these cases. For $(3,4)$ a single GCU block falls short of the true
optimum.}}
\label{tab:zanardi-compare}
\end{table}

It means that the upper bound on entangling power derived from the above corollary is tighter than that in Ref. \cite{Zanardi_2000} by a factor of $d_1/(d_1+1)$. Restricting to a single control basis block diagonal GCU gate necessarily costs something relative to the full unitary group. However, taking instead $M=d_2$ (control on the larger register, target dimension $N=d_1$) gives the alternative bound $(d_2-1)d_1/[(d_2+1)(d_1+1)]$, which can exceed the $M=d_1$ bound. The bound $M\le N^2$ follows because the Hilbert-Schmidt product $\langle A,B\rangle:=\mathrm{Tr}(A^\dagger B)$ is an ordinary inner product on the vector space of $N\times N$ complex matrices, and this space has dimension $N^2$. Therefore, any set of mutually Hilbert-Schmidt orthogonal
matrices (in particular, any set of target unitaries $U_i$) cannot exceed $N^2$ in number.
Table~\ref{tab:zanardi-compare} compares the entangling power obtained using Eq.~(\ref{eq:our-bound}) against the general upper bound and reported optimal values in Ref. \cite{Zanardi_2000}. \\

\section{Discussions}
\label{sec:6}
The central advantage of our framework is its computability and predictibility; it enables a circuit designer determine and quantify the entanglement a controlled unitary gate can generate directly from the probability distributions $\{p_i\}$ of the control register's state and the pairwise trace overlaps $\Lambda_{ij}$ of the gate's constituent unitaries, without ever requiring simulation of the output entangled state. This predictive power scales precisely where full state-vector simulation becomes expensive. The example of Toffoli-gate in Section~(\ref{sec:3}) concretely illustrates this fact. Because three of the four target unitaries ($U_{00},U_{01},U_{10}$) coincide with the identity, the gate fails the mutual-orthogonality
criterion, and $\zeta$ flags this immediately. Under a
uniform superposition over all four control branches, the achievable entanglement falls strictly below $\zeta_{\max}$. This shortfall is a property of the uniform control distribution specifically, not of the gate itself. The same Toffoli gate reaches
$\zeta_{\max}$ exactly under the non-uniform distribution $p_{11}=1/2$,
since only the $|11\rangle$ branch carries a nontrivial unitary, as clear from Eq.~(\ref{eq:toffoli-zetar}). Similar analysis can be performed for other entangling gates as well. 

While comparing the newly defined quantity with operator entangling power, we find that GCU-restricted bound of the quantity reproduces the numerically obtained optimal values in \cite{Zanardi_2000} for $(d_1,d_2)=(2,2)$, $(2,3)$, and $(2,4)$. This indicates that the
controlled unitary architecture is, for these dimension pairs, not simply a convenient restriction of the space of bipartite unitaries but the optimal one. The $(3,4)$ case, where a single GCU block falls short of the true optimum, suggests that this correspondence has a limited domain of validity, plausibly tied to whether the optimal unitary is expressible in a single GCU gate or instead requires concatenating controlled operations with alternating control roles~\cite{Zanardi_2000}.

The construction of $\zeta$ suggests several further uses. Because the mutual orthogonality condition is stated pairwise, a gate's shortfall from
$\zeta_{\max}$ can be attributed to specific underperforming branch pairs. Separately, the condition for $\zeta=0$ is an equality case of $|\Lambda_{ij}|\le1$, and the target states saturating it are precisely the eigenstates of the relevant relative unitary, suggesting a route to identifying eigenstates of an unknown unitary from entanglement generation measurements rather than diagonalization. More broadly, since $\zeta$ is built entirely from probability distributions and pairwise overlaps, it is naturally compatible with existing tomography-free estimation techniques such as the Hadamard test and randomized measurement~\cite{purity_from_random_measurements, Elben_purity_randomised_measurements}, offering a resource-efficient predictive approach to analyze entangling gates on real hardware. The main limitation of this approach is its restriction to bipartite settings, which means that it does not capture any entanglement generated between the qubits within the target and control registers themselves.

\section{Conclusion}
\label{sec:7}

We have shown that the entanglement generated by a controlled unitary gate can be completely determined by a  quantity $\zeta$, computable directly from the probability distribution of control register's basis states and the pairwise trace overlaps of the target unitaries, without requiring construction of the output state. For a two qubit controlled unitary gate, the quantity vanishes when the control qubit lacks coherent superposition of its basis states or the target sits in an eigenstate of the conditioning unitary, and reaches its maximum of $1/4$ when the control quibit is in equiprobable superposition of its basis states and the target is prepared suitably. More specifically, a maximally entangled output is achievable for some input \textit{iff} the unitary $U$ is traceless (Theorem 1). This condition also holds good for a generalized controlled unitary gate once expressed through the single relative unitary $U_{\mathrm{rel}}=U_1^\dagger U_0$ that governs its entire entangling behavior (Theorem 2). Extending the analysis to arbitrary control and target register sizes, we have proved that $\zeta$ is upper bounded by $\zeta_{\max}=(2^d-1)/2^{d+1}$ with $d=\min(m,n)$, and that this bound requires populating only $2^d$ control branches rather than a uniform superposition over all $2^m$ of them. Example of the Toffoli gate illustrates this fact, reaching $\zeta_{\max}=1/4$ at $p_{11}=1/2$ despite its four dimensional control register (Theorem 3).

While connecting the proposed quantity to the standard entanglement measures, we find that $\gamma=1-2\zeta$ and $\zeta_r=\zeta/\zeta_{\max}$. Interestingly, for particularly two qubit states, concurrence is
$C=2\sqrt\zeta$,  as in this case $\zeta$ reduces to be the determinant of the reduced density matrix of the control qubit. For a single control qubit we obtained the von Neumann entropy in closed form to arbitrary order. Newton's identities collapse the trace power series to the two term
recursion $\mathrm{Tr}(\rho^k)=\mathrm{Tr}(\rho^{k-1})-\zeta\,\mathrm{Tr}(\rho^{k-2})$, giving $S_2(\rho)=\tfrac{5}{2}\zeta$ at second order and a truncated series that converges monotonically to the exact entropy at every $\zeta\in[0,1/4]$. For a control register of any size $M$, the trace powers $\mathrm{Tr}(\rho^k)$ appearing in the Taylor expansion of von Neumann entropy can always be evaluated directly from Eq.~(\ref{eq:tracepowers}) using only the populations $p_i$ and overlaps $\Lambda_{ij}$, equivalently as $\mathrm{Tr}(A^k)$ for the $M\times M$ matrix $A_{ij}:=\sqrt{p_ip_j}\,\Lambda_{ij}$. This gives $S_N(\rho)$ to any desired order $N$ for a GCU gate of arbitrary register size. The only simplification lost beyond $M=2$ is that $S_N (\rho)$ no longer reduces to a polynomial in $\zeta$ alone, since Newton's identities then require the higher elementary symmetric polynomials $h_3,h_4,\dots$, which are not fixed by $\zeta=h_2$.

Expressing linear entropy as $E(|\Psi\rangle)=2\zeta$ allows us to compute the Haar-averaged entangling power of the controlled unitary architecture in closed form, yielding the bound $e_{p_0}(U)\le(M-1)N/[(M+1)(N+1)]$, which is tighter than the general result \cite{Zanardi_2000} by the factor $M/(M+1)$. This bound reproduces the previous numerically obtained optimal entangling power precisely for $(d_1,d_2)=(2,2)$, $(2,3)$, and $(2,4)$ (values of $2/9$, $1/3$, and $2/5$, respectively), showing that the controlled unitary construction is the optimal entangler rather than a convenient special case for these dimension pairs. The $(3,4)$ case, where this correspondence breaks down, marks the boundary of that optimality. Extension of the present analysis to qudit registers along with experimental estimation via existing tomography-free techniques will be the future step of this work.

\bibliography{bibliography}

@article{Wootters_1998,
   title={Entanglement of Formation of an Arbitrary State of Two Qubits},
   volume={80},
   ISSN={1079-7114},
   url={http://dx.doi.org/10.1103/PhysRevLett.80.2245},
   DOI={10.1103/physrevlett.80.2245},
   number={10},
   journal={Physical Review Letters},
   publisher={American Physical Society (APS)},
   author={Wootters, William K.},
   year={1998},
   month=Mar, pages={2245–2248} }

@article{Wootters_Hill,
   title={Entanglement of a Pair of Quantum Bits},
   volume={78},
   ISSN={1079-7114},
   url={http://dx.doi.org/10.1103/PhysRevLett.78.5022},
   DOI={10.1103/physrevlett.78.5022},
   number={26},
   journal={Physical Review Letters},
   publisher={American Physical Society (APS)},
   author={Hill, Scott and Wootters, William K.},
   year={1997},
   month=June, pages={5022–5025} }

@article{Wootters_Coffman_2000,
   title={Distributed entanglement},
   volume={61},
   ISSN={1094-1622},
   url={http://dx.doi.org/10.1103/PhysRevA.61.052306},
   DOI={10.1103/physreva.61.052306},
   number={5},
   journal={Physical Review A},
   publisher={American Physical Society (APS)},
   author={Coffman, Valerie and Kundu, Joydip and Wootters, William K.},
   year={2000},
   month=Apr }

@article{Zanardi_2000,
   title={Entangling power of quantum evolutions},
   volume={62},
   ISSN={1094-1622},
   url={http://dx.doi.org/10.1103/PhysRevA.62.030301},
   DOI={10.1103/physreva.62.030301},
   number={3},
   journal={Physical Review A},
   publisher={American Physical Society (APS)},
   author={Zanardi, Paolo and Zalka, Christof and Faoro, Lara},
   year={2000},
   month=Aug }

@misc{noisy_gates,
      title={Imperfect Entangling Power of Quantum Gates}, 
      author={Sudipta Mondal and Samir Kumar Hazra and Aditi Sen De},
      year={2025},
      eprint={2401.00295},
      archivePrefix={arXiv},
      primaryClass={quant-ph},
      doi={https://doi.org/10.1103/jl2b-bxfn},
      url={https://arxiv.org/abs/2401.00295}, 
}

@article{Horodecki_2009,
   title={Quantum entanglement},
   volume={81},
   ISSN={1539-0756},
   url={http://dx.doi.org/10.1103/RevModPhys.81.865},
   DOI={10.1103/revmodphys.81.865},
   number={2},
   journal={Reviews of Modern Physics},
   publisher={American Physical Society (APS)},
   author={Horodecki, Ryszard and Horodecki, Pawel and Horodecki, Michal and Horodecki, Karol},
   year={2009},
   month={June}, pages={865–942} }

@book{Nielsen_Chuang_2010, place={Cambridge}, title={Quantum Computation and Quantum Information: 10th Anniversary Edition}, publisher={Cambridge University Press}, author={Nielsen, Michael A. and Chuang, Isaac L.}, year={2010}}

@book{rudin1976,
  author    = {Rudin, Walter},
  title     = {Principles of Mathematical Analysis},
  edition   = {3rd},
  year      = {1976},
  publisher = {McGraw-Hill},
  address   = {New York},
  isbn      = {9780070542358}
}

@article{linearentropy1,
  title = {Mixedness and teleportation},
  author = {Bose, S. and Vedral, V.},
  journal = {Phys. Rev. A},
  volume = {61},
  issue = {4},
  pages = {040101(R)},
  numpages = {2},
  year = {2000},
  month = {Mar},
  publisher = {American Physical Society},
  doi = {10.1103/PhysRevA.61.040101},
  url = {https://link.aps.org/doi/10.1103/PhysRevA.61.040101}
}

@article{linearentropy2,
   title={Maximizing the entanglement of two mixed qubits},
   volume={64},
   ISSN={1094-1622},
   url={http://dx.doi.org/10.1103/PhysRevA.64.030302},
   DOI={10.1103/physreva.64.030302},
   number={3},
   journal={Physical Review A},
   publisher={American Physical Society (APS)},
   author={Munro, W. J. and James, D. F. V. and White, A. G. and Kwiat, P. G.},
   year={2001},
   month=Aug }

@book{newton,
    author = {Macdonald, I G},
    title = {Symmetric Functions and Hall Polynomials},
    publisher = {Oxford University Press},
    year = {1995},
    month = {03},
    isbn = {9780198534891},
    doi = {10.1093/oso/9780198534891.001.0001},
    url = {https://doi.org/10.1093/oso/9780198534891.001.0001},
}

@article{Choi_2023,
   title={Preparing random states and benchmarking with many-body quantum chaos},
   volume={613},
   ISSN={1476-4687},
   url={http://dx.doi.org/10.1038/s41586-022-05442-1},
   DOI={10.1038/s41586-022-05442-1},
   number={7944},
   journal={Nature},
   publisher={Springer Science and Business Media LLC},
   author={Choi, Joonhee and Shaw, Adam L. and Madjarov, Ivaylo S. and Xie, Xin and Finkelstein, Ran and Covey, Jacob P. and Cotler, Jordan S. and Mark, Daniel K. and Huang, Hsin-Yuan and Kale, Anant and Pichler, Hannes and Brandão, Fernando G. S. L. and Choi, Soonwon and Endres, Manuel},
   year={2023},
   month=Jan, pages={468–473} }

@article{Mele_2024haar,
   title={Introduction to Haar Measure Tools in Quantum Information: A Beginner\& amp;apos;s Tutorial},
   volume={8},
   ISSN={2521-327X},
   url={http://dx.doi.org/10.22331/q-2024-05-08-1340},
   DOI={10.22331/q-2024-05-08-1340},
   journal={Quantum},
   publisher={Verein zur Forderung des Open Access Publizierens in den Quantenwissenschaften},
   author={Mele, Antonio Anna},
   year={2024},
   month=May, pages={1340} }

@book{geo_q_state,
  author    = {Bengtsson, Ingemar and {\dots}yczkowski, Karol},
  title     = {Geometry of Quantum States: An Introduction to Quantum Entanglement},
  edition   = {2nd},
  publisher = {Cambridge University Press},
  year      = {2017},
  doi       = {10.1017/9781139207010},
  isbn      = {978-1-107-02625-4}
}

@article{
purity_from_random_measurements,
author = {Tiff Brydges  and Andreas Elben  and Petar Jurcevic  and Benoît Vermersch  and Christine Maier  and Ben P. Lanyon  and Peter Zoller  and Rainer Blatt  and Christian F. Roos },
title = {Probing Rényi entanglement entropy via randomized measurements},
journal = {Science},
volume = {364},
number = {6437},
pages = {260-263},
year = {2019},
doi = {10.1126/science.aau4963},
URL = {https://www.science.org/doi/abs/10.1126/science.aau4963},
eprint = {https://www.science.org/doi/pdf/10.1126/science.aau4963}}

@article{Elben_purity_randomised_measurements,
   title={The randomized measurement toolbox},
   volume={5},
   ISSN={2522-5820},
   url={http://dx.doi.org/10.1038/s42254-022-00535-2},
   DOI={10.1038/s42254-022-00535-2},
   number={1},
   journal={Nature Reviews Physics},
   publisher={Springer Science and Business Media LLC},
   author={Elben, Andreas and Flammia, Steven T. and Huang, Hsin-Yuan and Kueng, Richard and Preskill, John and Vermersch, Benoît and Zoller, Peter},
   year={2022},
   month=Dec, pages={9–24} }

@article{Barenco_1995,
   title={Elementary gates for quantum computation},
   volume={52},
   ISSN={1094-1622},
   url={http://dx.doi.org/10.1103/PhysRevA.52.3457},
   DOI={10.1103/physreva.52.3457},
   number={5},
   journal={Physical Review A},
   publisher={American Physical Society (APS)},
   author={Barenco, Adriano and Bennett, Charles H. and Cleve, Richard and DiVincenzo, David P. and Margolus, Norman and Shor, Peter and Sleator, Tycho and Smolin, John A. and Weinfurter, Harald},
   year={1995},
   month=Nov, pages={3457–3467} }

@article{DiVincenzo_1995,
   title={Two-bit gates are universal for quantum computation},
   volume={51},
   ISSN={1094-1622},
   url={http://dx.doi.org/10.1103/PhysRevA.51.1015},
   DOI={10.1103/physreva.51.1015},
   number={2},
   journal={Physical Review A},
   publisher={American Physical Society (APS)},
   author={DiVincenzo, David P.},
   year={1995},
   month=Feb, pages={1015–1022} }

@book{boas2005mathematical,
  title={Mathematical Methods in the Physical Sciences},
  author={Boas, Mary L.},
  edition={3rd},
  year={2005},
  publisher={John Wiley \& Sons},
  address={Hoboken, NJ}
}

\end{document}